\newtheorem{prethm}{{\bf Theorem}}
\newtheorem{mydef}{\bf Definition}
\newtheorem{example}{\bf Example}
\newtheorem{prelem}{{\bf Lemma}}
\newenvironment{lem}{\begin{prelem}{\hspace{-0.5
               em}}}{\end{prelem}}
\newtheorem{preproof}{{\bf Proof}}
\newenvironment{proof}[1]{\begin{preproof}{\rm
               #1}\hfill{$\Box$}}{\end{preproof}}
\renewcommand{\thefootnote}
\title{A Computational Model and Convergence Theorem for Rumor Dissemination in Social Networks} 
\author{Masoud Amoozgar $^a$, Rasoul Ramezanian $^{b}$\\
{\footnotesize {\em $^{a,b}$Department of Mathematical Sciences,
Sharif University of Technology, Tehran, Iran}}\\
{\footnotesize {\em Laboratory of complex and multi-agent systems}}\\
{\footnotesize\texttt{{amoozgar@cs.sharif.edu}, {ramezanian@sharif.edu}}}}
\begin{document}
\maketitle
\bibliographystyle{jtbnew}

\setcitestyle{authoryear,round,aysep={}}

\begin{abstract}

The spread of rumors, which are known as unverified statements of uncertain origin, may threaten the society and its controlling is important for national security councils of countries. If it would be possible to identify factors affecting spreading a rumor (such as agents' desires, trust network, etc.), then this could be used to slow down or stop its spreading. A computational model that includes rumor features and the way a rumor is spread among society's members, based on their desires, is therefore needed. Our research is focused on the relation between the homogeneity of the society and rumor convergence in it and result shows that the homogeneity of the society is a necessary condition for convergence of the spread rumor.

\noindent
\textbf{Keywords:} Information Propagation, Social Simulation, Agent-based Modeling

\end{abstract}
\section{Introduction}
The main characteristic of a rumor is its capacity to spread uncontrollably, with an incredible rapidity, so that in a very short period of time it can produce catastrophic damage (\citet{kimmel2004a}). Even if transmitted orally, its influence can be exacerbated by new means of media transmitting and the increasingly need for information. 
\\\indent During World War II, the Americans, under the Office of Strategic Services (OSS), began developing their own rumor-weapon technologies with the help of the scientist Robert Knapp, who also wrote about rumors in an academic context (see \citet{knap1944a}). Knapp's work was adopted by the OSS in 1943 to create a sort of manual for rumor engineers during the war. This document was de-classified in 2004. The manual says that rumors are made memorable by being simple, making concrete references, using stereotypical phrases. In this case,  the suggested model for rumor tries to conclude simple logical concepts to keep its simplicity and phenomenon.
\\\indent Controlling rumors in a country is a soft power (see \citet{nye2004b}) which leads to success in   world politics, economics and etc. In this way, spread rumor is completely related to security of nations. For example, according to CIA declassified documents about  Iran coup 1953, CIA uses spreading  propaganda against  Iran's current  prime minister to undermining his public standing: \textit{This appears to be an example of CIA propaganda aimed at undermining the Prime Minister's public standing, presumably prepared during Summer 1953. It certainly fits the pattern of what Donald Wilber and others after him have described about the nature of the CIA's efforts to plant damaging innuendo in local Iranian media. In this case, the authors extol the virtues of the Iranian character, particularly as admired by the outside world, then decry the descent into "hateful," "rough" and "rude" behavior Iranians have begun to exhibit "ever since the alliance between the dictator  and the Tudeh Party."} (see \citet{ciadoc}).
\\\indent It is important to realize where the starting point for rumor dissemination in the society is (see \citet{kostka2008a}), and how it has been observed. Because different starting point in the network leads to different spreading behavior for a rumor. The structure of society and its homogeneity are two important factors on how rumor statements mutate and how many people can hear and change the propositions of the rumor, and this might lead to an equilibrium point (see \citet{piqueira2010a}). If spread rumor reaches to an equilibrium point, then it clarifies that desires of corresponding agents in the society do not have any  of propositions in the converged rumor. Therefore, rumor spreading in a society can reveal belief indicators of the society. Most mathematical studies of rumor propagation include both theoretical and applied aspects: Graph theory, Dynamic systems and, to a smaller extent, stochastic processes that have played prominent roles in many models of rumor and information propagation (see \citet{olles2006a}, \citet{zhu2010a}, \citet{fu2007a}, \citet{huang2000a}). Daley and Kendal proposed a general model of rumor spreading (see \citet{daley1956a}, \citet{daley2000b}). The Daley-Kendal model and its variants have been used repetitively in the past for quantitative studies of rumor spreading (see \citet{pittel1990a}, \citet{lefevre1944a}, \citet{sudbury1985a}). 
\\\indent Recently, the mathematical models and corresponding simulations on rumor spreading mainly investigated the topology and dynamic of networks, without underlying the logical and psychological aspects of rumor dissemination phenomenon among network's agents (see \citet{zhou2007a}, \citet{olles2006a}, \citet{zanette2001a}). An important shortcoming of the mentioned models is that they either do not investigate change of rumors while spreading or do not take advantage of computational model for describing social interactions and rumor spreading process (see related works in section \ref{relatedworks}).  Therefore, a new model is required, which helps better analysis of rumor changes through agents' interactions.
\\\indent We do not claim that our modeling is better than other counterpart computational models, indeed. Works which consider Markov Decission Process method regard rumor spreading as a macro level of a society, whereas in our work, we study rumor spreading in a micro level regarding agents and the connection network among them. Most of the counterpart models for rumor spreading are less focused on the phenomenon of rumor computationally. This paper proposes consonant mathematical models for both rumor and the social network, which rumors are spreading through; where both rumors and beliefs of agents playing role in rumor dissemination have similar structures. This structure lets researchers to investigate all individuals of the model as vectors containing -1, 0 and 1s. In the some of outstanding counterpart models (such as \citet{piqueira2010a} and \citet{acemoglu2010a}) meeting among agents is modeled based on probable events while transferring information between them and the entire process can be modeled mathematically as MDP where the equilibrium point can be calculated as obtained eigenvalues of the Transition Matrix.
\\\indent There are three Simulation Modeling methods for approximating real life behavior: Discrete Events, System Dynamics and Agent-based Modeling (see \cite{borshchev2004}, \cite{bonabeau2002}). We consider Markov Decission Process as a suitable method to study rumor spreading, but it is categorized in the System Dynamics. Our purpose in this paper is to providing an agent-based model for rumors and analysis of their spreading. It indicates the relationship between the homogeneity of the network and converging of rumor in stable colonies. The research result demonstrates that convergence of spread rumor is independent from spreading process, and network homogeneity is a necessary condition for convergence of the  spread rumor.
\\\indent Simulation of rumor spreading is also vitally important for analyzing and extracting principles from observations. Through simulation we are given the opportunity to evaluate the accuracy of pre assumed principles or obtain initial material for proving related theorem whether the homogeneity of the society is a necessary condition for the convergence of the spread rumor. Beside simulation approach we mathematically have proved theorem \ref{firsttheorem} (see section \ref{sec:mathanal}) about the relation between homogeneity and rumor convergence.
\\\indent In the rest of this paper, in section \ref{computationalmodel}, a computational model for rumor, society and procedure of rumor propagation is described in detail. Section \ref{computersimulation} investigates applied algorithms for simulation of rumor spreading. Section \ref{modelanalysis} analyzes the model and investigates related attributes of a rumor and conditions of rumor convergence. Section \ref{relatedworks} and \ref{futurwork} finish with related works and further work, respectively.

\section{Computational Model}\label{computationalmodel}
Rumor propagation happens as follows:
\\\indent First an actual event happens in the environment, and some agents observe it. The observers based on their desires change the story of the event and send a rumor to their neighbors in the network. After that, each agent, who receives versions of the rumor, produce a new version, based on what he received and also its desire.
\\\indent To describe the environment, we consider a finite set $P=\{p_1,p_2,$ ... $,p_n\}$ of atomic propositions. A \textit{literal} is an atomic proposition or the negation of an atomic proposition. We define a conjunctive literal form (CLF) proposition over P to be a conjunction of literals, where for all atomic propositions $p_i\in P$, either $p_i$ or $\neg p_i$ appears in it, but not both of them. Therefor any CLF proposition $\varphi$ over $P$, is a string $b_1 b_2$ ... $b_n$ in $\{0,1\}^{*}$ with length $n$, where $b_i=1$ if and only if $p_i$ appears in $\varphi$, and $b_i=0$ if and only if $\neg p_i$ appears in $\varphi$.
\\\indent For example, for $P=\{p_1,p_2,p_3\}$,  $\neg p_1\wedge p_2\wedge \neg p_3$ is a CLF proposition, and we represent it by the string $010$. We suppose that, an actual event in the environment or a rumor of it, is a CLF proposition over $P$.
\\\indent A mathematical model for rumor propagation, on a set of atomic propositions $P=\{p_1,p_2,$ ... $,p_n\}$, is a 5-tuple
\begin{equation}
R= \langle Ag,Ob,rumor_P,\delta,\tau \rangle
\end{equation}
where:
\begin{itemize}
  \item $Ag$ is a finite set of agents
  \item $Ob\subsetneqq Ag$ refers to initial observers of the actual event
  \item $rumor_P$ is the set of all CLF formulas over $P$
  \item $\delta$ is a priority function that assigns a real number to each proposition, that is, $\delta : P \rightarrow [0,1]$
  \item $\tau$ is a trust function, that is, $\tau: Ag \times Ag \rightarrow [0,1]$
\end{itemize}

\subsection{Rumor Model}
When an agent witnesses an event, the agent creates a rumor based on its observation of the facts that constitute the event. We refer to the set of agents witnessing the same event as observers, by $Ob$, that are assumed as input channels of rumor spreading system. The rumor is then diffused inside the network. Assuming $P$, to be a set of atomic formulas to describe the environment, an actual event therefore is a CLF formula over $P$. Furthermore, spread rumors of the spreading system are CLF formulas.
\subsection{Agent Model}\label{agentmodel}
It is observed multiple times that agents, who play role in receiving and spreading a story, change its details based on their own desires (see \cite{stern1902a}). In our model, a 4-tuple represents each agent $ag\in Ag$:
\begin{equation}
ag_a=\langle D,Box,\beta,\upsilon \rangle
\end{equation}
where $D$ refers to the desire of the agent, which includes two subsets of $P$, positive subset $\Gamma_a^+$and negative subset $\Gamma_a^-$ so $D=(\Gamma^+,\Gamma^-)$. Interest of the agent determines whether a proposition in $P$ members in $\Gamma_a^+$ or $\Gamma^-$ and $\Gamma^+\cap \Gamma^-=\varnothing$. Any proposition out of the mentioned subsets is supposed to be \textit{inconsiderable} for the agent. We map each agent $ag\in Ag$ to a vector made of $1$, $-1$ and $0$s, which are referred to being in $\Gamma^+, \Gamma^-$ or not being desired, respectively. For example for $P=\{p_1,p_2 ,p_3 ,p_4 ,p_5\}$, agents $ag_0, ag_1$ with desires $D_0=(\Gamma_0^+,\Gamma_0^- )$, $D_1=(\Gamma_1^+,\Gamma_1^- )$, respectively, where $\Gamma_0^+=\{p_1,p_5 \}$, $\Gamma_0^-=\{p_2,p_3 \},\Gamma_1^+=\{p_1,p_3,p_4 \}, \Gamma_1^-=\{p_2 \}$, the corresponding vectors are $\vec{ag_0}  =(1,-1,-1,0,1), \vec{ag_1}  =(1,-1,1,1,0)$. We define $\pi$ as a membership function that is $\pi: Ag\times P\rightarrow\{-1,0,1\}$ and $M$ as considered matrix for $\pi$. So considering matrix for the above example is:
$$M=\left[ \begin{array}{ccccc}
     1 & -1 & -1 & 0 & 1  \\ 1 & -1 & 1 & 1 & 0 
\end{array} \right]$$
\\\indent Each agent has a set indicated by $Box_a\subset rumor_P$, as rumor box, to store all received versions of an event, and the agent combines the received rumors according its desire and then generates a new version to propagate (For more details see the second paragraph of section \ref{sec:dissmodel}). 
\\\indent \textit{Accepting procedure} determines agents' decision of whether to accept or not the resultant of received versions of a rumor. In order to insert mentioned attribute into the model, the function $\beta_a: Rumor\rightarrow \{Yes,No\}$ is defined. The motivation behind defining this function is presented in section \ref{sec:dissmodel}.
\\\indent Let's assume that in the real world, every individual lies with the probability of $1-\upsilon_a$, where $\upsilon_a\in [0,1]$ is the veracity value of agent $ag_a$. Any agent lies (change the value of propositions) to augment the number of propositions that fit its desire. In this model, mentioned value is also used to describe the probability of mutating a rumor before propagating.
\subsection{Network Model}
By rapid growth of social networks, and considering that the major part of the interactions occur among unknown individuals; trust plays an undeniable role in forming the relationship among agents and propagation of information through them. The notion of trust has been used in literature in most researches such as \cite{leskovec2010a} , \cite{guo2011a}, \cite{buskens2002a}. So we consider that this notion is essential in spread rumors. In our modeling, the spreading network is modeled as a complete weighted directed graph $G = (V, E, \tau)$ where each vertex in $V$ represents an agent, and each edge in $E$ represents a connection between a pair of agents. Each edge has a weight value, which refers to trust of the tail agent to the head agent while a rumor get transmitted between them (from head agent as spreader to tail agent as receiver); the function $\tau: E\rightarrow [0,1]$ returns this value. 
\subsubsection{Assumptions}
In order to simplify our model we will state an assumption referring to the network model as follow:
\begin{enumerate}
\item Each agent trusts itself completely:
\begin{equation}\label{mutualtrustass}
\forall a,  \tau(ag_a,ag_a )=1
\end{equation}
\item For a triple of agents ($ag_a ,ag_b,ag_k\in Ag$):
\begin{equation}\label{secondass}
\tau(ag_a ,ag_b )\geq \tau(ag_a ,ag_c )\tau(ag_c ,ag_b )
\end{equation}
This assumption guarantees that the trust value that is raised from direct connection is bigger than relative trust, which is obtained from non-directed connections. If $ag_a$ trusts $ag_c$ with value $\tau(ag_a ,ag_c )$ and $ag_c$ trusts $ag_b$ with value $\tau(ag_c ,ag_b )$, then the trust of $ag_a$ to $ag_b$ is at least $\tau(ag_a ,ag_c )\tau(ag_c ,ag_b )$.
\end{enumerate}
Throughout the rest of this paper we assume that mentioned assumptions are held.
\subsection{Rumor dissemination Model}\label{sec:dissmodel}
In a social system, people constitute a network of connections so that each connection represents how much a couple of agents trust each other mutually. Considering the spreading network $G = (V, E, \tau)$ in this model, each node as a candidate of an agent connects to its neighbor agents. At the beginning, some agents are chosen as \textit{observers}. They observe an actual event and make initial rumors, then spread it through the network. In every generation, an agent is chosen at random from the agent population for taking chance of generating and spreading its generated rumor. A spreader agent calls \textit{hear function} of each neighbor to spread the generated rumor. Each hearer agent $ag$ adds received rumor to its rumor box, $Box$:
\begin{equation}
Box=\{r_1,r_2,...,r_k\}
\end{equation}
where rumor $r_i$ is represented as a binary string:
\begin{equation}
r_i=b_1^i b_2^i...b_n^i \indent b_j^i\in\{0,1\}.
\end{equation}
\\\indent An agent might receive many rumors from different neighbors; the trust value of each connection determines the effectiveness amount of the corresponding received rumor in the merge procedure. When it comes the turn of the agent $ag_a$, first, he starts merging rumors in the $Box_a$. As mentioned, a rumor is a finite string $b_1 b_2 ... b_n$ in $\{0,1\}^*$. Merging procedure gives a new string as the \textit{majority} of preceding strings of received rumors in $Box_a$. In order to obtain the merge procedure's \textit{output} string $r^o = b_1^o b_2^o ... b_n^o$, the following equation is used to calculate the binary value of $j^{th}$ proposition in the merged rumor:
\begin{equation}\label{majorityeq}
b_j^o= \lfloor \frac{1}{2}+\frac{\sum_{r_k\in Box_a}\tau(r_k.spreader ,r_k.reciever)\times b_j^k }{\sum_{r_k\in Box_a}\tau(r_k.spreader ,r_k.reciever)}\rfloor
\end{equation}

The above formula, gives weighted average of bits, which are referred to the binary values of propositions $p_j$ in received rumors, which are stored in the rumor box; so that contribution of each bit $b_j$ is dependent to the trust value between spreader and receiver of the corresponding rumor. Then it checks if the average is bigger than $0.5$, returns 1, otherwise it returns zero. In the rest we define the concept of \textit{accepted propositions} to formalize conflicts between a spread rumor and desire of an agent:
\begin{mydef}\label{acceptedprs}
A proposition $p_i\in P$ in a rumor $r= b_1 b_2...b_n$ is  \textbf{accepted} for agent $ag_a\in Ag$, if the proposition is in the positive subset of the agent's desire and its value is true or it is in the negative subset of the agent's desire and its value is false. In other words when one of the following conditions become satisfied:
\begin{enumerate}
  \item $b_i=1,p_i\in \Gamma_a^+$
  \item $b_i=0,p_i\in \Gamma_a^-$
\end{enumerate}
and it is \textbf{unaccepted} when one of the following conditions become satisfied:
\begin{enumerate}
  \item $b_i=0,p_i\in \Gamma_a^+$
  \item $b_i=1,p_i\in \Gamma_a^-$
\end{enumerate}
otherwise the proposition is supposed to be an \textbf{inconsiderable} proposition. We refer to the set of accepted and unaccepted propositions by $\Pi_a$ and $\Delta_a$, respectively.   
\end{mydef}

After applying the merge procedure to all the propositions included in the received versions in rumor box of $ag_a$, it is time to decide accepting or rejecting the merged version ($r_a^o$). In our simulations, following method is used for accepting function:
\begin{equation}
\beta_a (r_a^o)=\left\{
\begin{array}{rcl}
Yes,&\frac{|\Pi_a^t|}{|P|} > \Theta_a \\No,&otherwise
\end{array}\right.
\end{equation}
\\\noindent where $|\Pi_a^t|$ denotes to the number of accepted propositions (Definition \ref{acceptedprs}) in rumor $r$ for the agent $ag_a$. In the above equation, if ratio of accepted propositions over all propositions goes upper than the accepting threshold $\Theta_a $, the agent decides strictly if he accepts the rumor, otherwise there is not any probability to accept and its turn will be over without spreading any rumor. 
Also through out conveying the rumor, changing some details is possible, therefore the hearer agent might alter some details, based on its desire, which results mutation in the merged version and generate the \textit{final promoted version}. 
\begin{mydef}\label{grumordef}
{The final promoted version of a rumor by agent $ag_a$, in generation $t$ is reffered as \textbf{${\gamma_a}^t$}. This version first became accepted  through the accepting function and then desire of $ag_a$ is applied on it.}
\end{mydef}

In the rest, if the agent wants to alter the value of a proposition of its last promoted version $({\gamma}^{t-1})$ (Definition \ref{grumordef}), then at least one high priority proposition must be selected from unaccepted propositions, using \textit{Roulette wheel selection algorithm} (see \citet{mitchell1998a}). Where the probability of choosing an unaccepted proposition $\tilde{p}_k\in\Delta_a$ is
\begin{equation}\label{rouletteeq}
\mathbb{P} \{choosing \indent\tilde{P}_k \}=\frac{\delta(\tilde{p}_k)}{\sum_{\tilde{p}_j\in \Delta_a}\delta(\tilde{p}_j)}
\end{equation}

The priority of a proposition is the result of its social importance, so that propositions with higher priority preempt those with lower priority when choosing by an agent. There are different ways for defining priorities, which some of them have been suggested in this paper: \citet{eugster2004a}.  After agent $ag_a$ selects target proposition; changes it with the probability of $1-\upsilon_a$, where $\upsilon_a$ denotes to its veracity value. Then the agent spread its generated rumor to the neighbor agents.
\section{Computer Simulation}\label{computersimulation} 
In this section, implemented methods for computer simulation of rumor propagation are investigated. As target of computer simulation, it has to do with the manipulations of symbols using a computer code; more specifically, it uses algorithms and rules to derive mathematical propositions from assumptions (see \citet{mollana2008a}). Therefore written codes and used algorithms will be discussed, which are proposed to help us approaching the target of simulation.
\\\indent As mentioned in the section \ref{sec:dissmodel}, each agent has the chance to spread last generated rumor on its turn. By calling the RUN function (Algorithm 1), first, the agent starts merging received versions in the rumor box using the COBINE-RECEIVED-VERSIONS function (Algorithm 6), second, by calling the CAN-ACCEPT function (Algorithm 5), he checks whether the merged version is acceptable or not and then if the last stage is passed, he changes the rumor based on its desires by calling the MUTATE function (Algorithm 3). Finally the agent spreads the verified version to the neighbors by calling their HEAR functions (Algorithm 4) .

\begin{algorithm}[H]\label{run}
\caption{RUN ( )}
\begin{algorithmic} 
\IF{$CAN-ACCEPT( )$}
\STATE $spreadRumor \leftarrow SPREAD-GENERATED-RUMOR( )$
\FORALL{$e \in EdgesToFriends$}
\STATE HEAR($e.targetAgent , spreadRumor , e.trustCoefficient$)
\ENDFOR
\STATE CLEAR( RUMORBOX )
\STATE PUT( RUMORBOX , $spreadRumor$ ) 
\ENDIF
\end{algorithmic}
\end{algorithm}

\begin{algorithm}[H]\label{sgr}
\caption{SPREAD-GENERATED-RUMOR ( )}
\begin{algorithmic} 
\STATE MUTATE( $acceptedRumor , desire$)
\RETURN $acceptedRumor$
\end{algorithmic}
\end{algorithm}

\begin{algorithm}[H]\label{mutate}
\caption{MUTATE( $acceptedRumor , desire$)}
\begin{algorithmic} 
\FORALL{$p \in rumor.propositions$} 
\IF{IS-POSITIVE-PROPOSITION( $p , desire$ )}
\IF{$p.value$ = FALSE}
\STATE ADD( REJECTED-SET ,$ p$ )
\ENDIF 
\ELSIF{ IS-NEGATIVE-PROPOSITION( $p , desire$ )}
\IF{$p.value$ = TRUE}
\STATE ADD( REJECTED-SET , $p$ ) 
\ENDIF
\ENDIF
\ENDFOR
\STATE $selectedPropsition \leftarrow $ DO-SELECTION( REJECTED-SET )
\IF{RANDOM(0 , 1)$ < 1 - veracityCoefficient$}
\STATE UPDATE CODE($ rumor , selectedPropsition , \neg selectedPropsition.value$ )
\ENDIF
\end{algorithmic}
\end{algorithm}

\begin{algorithm}[H]\label{hear}
\caption{HEAR ( $listenerAgent , rumor , trust$)}
\begin{algorithmic} 
\IF{ $listenerAgent$ .RUMORBOX.CNTAINS(  $rumor$ ) = FALSE}
\STATE PUT($listenerAgent$ .RUMORBOX$ , rumor , trust$)
\ENDIF
\end{algorithmic}
\end{algorithm}

\begin{algorithm}[H]\label{canbel}
\caption{CAN-ACCEPT ( )}
\begin{algorithmic} 
\STATE $acceptedRumor \leftarrow$ COMBINE-RECEIVED-VERSIONS( )
\STATE $nrOfAcceptedPropositions  \leftarrow 0$
\STATE $nrOfRejectedPropositions  \leftarrow 0.01$
\FORALL{$p \in acceptedRumor.propositions$}
\IF{ IS-POSITIVE-PROPOSITION( $p , desire $)}
\IF{ $p.value$ = TRUE}
\STATE $nrOfAcceptedPropositions \leftarrow nrOfAcceptedPropositions+1$
\ELSE
\STATE $nrOfRejectedPropositions \leftarrow nrOfRejectedPropositions+1 $ 
\ENDIF
\ELSIF{IS-NEGATIVE-PROPOSITION( $p , desire$ )}
\IF{ $p.value$ = FALSE }
\STATE $nrOfAcceptedPropositions \leftarrow nrOfAcceptedPropositions+1$
\ELSE
\STATE $nrOfRejectedPropositions \leftarrow nrOfRejectedPropositions+1$
\ENDIF
\ENDIF
\ENDFOR
\STATE $accRatio \leftarrow nrOfAcceptedPropositions /( nrOfAcceptedPropositions + nrOfRejectedPropositions) $
\IF {($acceptedRumor.attractivenessCoefficient+accRatio$) $> acceptingThreshold$}
\RETURN FALSE
\ELSE 
\RETURN TRUE
\ENDIF
\end{algorithmic}
\end{algorithm}

\begin{algorithm}[H]\label{crr}
\caption{COMBINE-RECEIVED-RUMORS ( )}
\begin{algorithmic} 
\STATE $tempRumor \leftarrow network.initialObservation$
\FORALL{$p \in tempRumor.propositions$} 
\STATE $denominator \leftarrow 0.01$
\STATE $enuminator \leftarrow 0$
\FORALL{$r \in RUMORBOX$} 
\IF{ $r.value = TRUE$}
\STATE $val \leftarrow 1$
\ELSE  
\STATE $val \leftarrow 0$
\ENDIF
\STATE $enuminator \leftarrow enuminator + GET(RUMORBOX, r) \times val$
\STATE $denominator \leftarrow denominator + GET(RUMORBOX, r) $
\ENDFOR
\STATE $effRatio\leftarrow enuminator / denominator$
\IF{ $effRatio < 0.5$} 
\STATE UPDATE CODE( $r , p$ , FALSE )
\ELSE 
\STATE UPDATE CODE( $r , p$ , TRUE )
\ENDIF
\ENDFOR
\RETURN $ tempRumor$
\end{algorithmic}
\end{algorithm}

\section{Model Analysis}\label{modelanalysis}
It is expected that if there does not exist any conflict among beliefs, a spread word of mouth through this group could easily converge to consensus version. In our research, we investigate whether homogeneity of society is necessary or/and sufficient condition for the convergence of spread rumor. Therefore in this section we try to answer the following questions: 
\begin{itemize} 
\item If there does not exist conflicting in desires, despite the presence of different versions at the beginning, do they converge to one consensus version after a finite number of generations? (Do all agents accept a similar version of rumor?)
\item In a rumor-society system, if different versions of a rumor reache an equilibrium point (converge), does it ensure that the society was homogeneous?
\end{itemize}
As dissemination of rumors is not independent from members of society (it is affected by desires of agents), so convergence concept has to be clarified so that we can clearly state status of agents in the equilibrium point. In this regard, mathematical concept of \textit{stability} is defined. Also in order to mainstream analysis, we define another prominent concept mathematically; which is \textit{homogeneity}.

\subsection{Stability}
If a system provides its parts with an optimum environment, then they will tend to conserve it. In our rumor spreading system, an agent reaches the optimum point when no change in the details of the received rumor is required. Individual stability extends to social stability by attributing individual stability for all agents.
\subsubsection{Individual Instability}
In the real world, each agent tends to change the content of the received versions of a rumor based on his own desire. As mentioned in the previous chapters, each agent $ag_a$ changes merged version with the probability of $1-\upsilon_a$, the following equation for calculating individual instability value of agent $ag_a$ in $t^{th}$ generation is proposed:
\begin{equation}\label{deltaeq}
\Omega_a^t=|{\Delta_a^*}^t| (1-\upsilon_a )
\end{equation}
where $|{\Delta_a^*}^t|$ denotes to the number of \textit{unaccepted} propositions in ${\gamma_a}^t$ for agent $ag_a$ (see definition 2). The above equation implies that a veridical agent would not apply its desire on the resultant of received versions. In the next section the average value of $\Omega$ on the colony will be used as \textit{momentum individual instability} in each generation of rumor spreading process.
\subsubsection{Social Instability}
Social stability is obtained only if all agents become stable; but predicting an agent's stability is impossible without knowing the status of its neighbor agents. Therefore, the definition of social instability is somehow different from calculating the average of individual instabilities. In fact social instability is a completely time-dependent attribute. We can hereby state that a society is stable if, and only if it will not lose its stability within the next generations, under any circumstances. In the rest, social instability is investigated on sequence of obtained values $I_C (t)$ which denote the mean instability of colony $C$ in $t^{th}$ generation:
\begin{equation}\label{instabilityeq}
I_C (t)=\frac{1}{|C|} \sum_{ag_a\in C}\Omega_a^t 
\end{equation}
\subsection{Homogeneity}
In order to compare colonies with each other, without knowing about their interactions with rumor spreading, we need to define a new network attribute. This attribute is independent from rumor spreading but its impact on rumor spreading must be determined. In this paper, the term of homogeneity refers to the degree to which the desires of agents in society to be compatible. In other words, it is a measure for showing the general lack of tendency to change value of propositions in the spread rumor. The homogeneity attribute for a colony arises from combining the three following measures: 
\begin{enumerate}
  \item The closeness of agents' desires
  \item The corresponding trust value for each connection in the colony
  \item The veracity value of agents
\end{enumerate}

We give a measure for closeness of agents' desires based on their \textit{identical distance}: 
\begin{mydef}\label{identicaldistancedef}
The \textbf{identical distance} between agents $ag_a$,$ag_b\in Ag$, with corresponding vectors $\vec{ag_a}$ and $\vec{ag_b}$ , respectively, is defined:
\begin{equation}
d(\vec{ag_a} ,\vec{ag_b })=\sqrt{\sum_{k=1}^{|P|}\lfloor \frac{|M_{a,k}-M_{b,k}|}{2}\rfloor\times\delta(p_k)^2 }
\end{equation}
where $\delta$ returns the global priority for each proposition $p_k\in P$ and $M$ is the matrix of membership function mentioned in the section \ref{agentmodel}.
\end{mydef}

\begin{mydef}\label{conflictdef}
We say two agents $ag_a,ag_b\in Ag$ conflict, whenever there exists a proposition $p_k\in P$, so that:
$$p_k\in\Gamma_a^+\cap\Gamma_b^- \vee p_k\in\Gamma_a^-\cap\Gamma_b^+$$
\end{mydef}
The following equation is proposed to give a measure for homogeneity of colony based on measure of its heterogeneity:

\begin{equation}\label{homoeq}
h_C=\exp(-\sum_{ag_a\in C}\sum_{ag_b\in C}H_{a,b})
\end{equation}
where $H_{a,b}$ is heterogeneity between agents $ag_a$ and $ag_b$
\begin{equation}\label{hceq}
H_{a,b} = d(\vec{ag_a}  ,\vec{ag_b} ) \tau(ag_a,ag_b )(1-\upsilon_a )
\end{equation}
\\\indent Two things are particularly important in this equation. First, it underlines the idea that the homogeneity of a society depends not only on consistency between agents' desires, but also on the veracity and the trust-based connection between agents. Second, the existence of honest agents (with $\upsilon_a=1$) in a society increases the homogeneity of the society.

\subsection{Mathematical Analysis}\label{sec:mathanal}
In order to analyze the model mathematically, we provide some examples. Each example illustrates how a rumor spread through its corresponding network and which attributes mentioned in the model, effects rumor spreading. Then a theorem on convergence of the spread rumor is proved based on the presented computational model. It is worthwhile to mention that the number of agents in simulations is not significant. The main purpose of given examples is to verify the upcoming theorem. Each example investigates a specific mode of rumor spreading to confirm completeness of considered theorem.
\\\indent In all following examples, we consider the below table that shows a proposition set P that includes 23 propositions and the priority of each proposition is determined subsequently in the second row:
\begin{table}[H]\footnotesize
\centering
\begin{tabular}{|c|c|c|c|c|c|c|c|c|c|c|c|c|c|c|}
\hline
i & 1&2&	3&	4&	5&	6&	7&	8&	9&	10&	11&	12&	13&	14\\\hline
$\delta(p_i)$ &0.8&	0.1&	0.7&	0.3&	0.4&	0.4&	0.6&	0.6&	0.3&	0.2&	0.3&	0.6&	0.2&	0.8\\\hline
\end{tabular}
\begin{tabular}{ |c|c|c|c|c|c|c|c|c|c|c|c|c|c|c|}
\hline
i&15&16&17&18& 19&20&21&22&	23\\\hline
$\delta(p_i)$ &	0.5&	0.5&	0.1&	0.9&	1.0&	0.4&	0.5&	1.0&	0.2\\
\hline
\end{tabular}
\end{table}

We also consider $r_{io}=11101001101110101001010$ as the initial observation of the actual event.

\begin{example}\label{firstex}
There are 9 agents with common veracity value ($\upsilon_i=0.5,i=1,2,... 9$) and $Ob=\{ag_9\}$. For each agent $ag_i\in Ag$, the desires are shown in the Table $2$.
\begin{table}[H]\label{firsttable}
\centering
\begin{tabular}{c |rrrrrrrr}
Sets of desire &\multicolumn{8}{l}{ Included propositions}\\\hline
$\Gamma_i^+$ &1&2&3&7&9&11&13&	17\\
$\Gamma_i^-$&4&	6&8&14&16&20&21&23\\
\end{tabular}
\caption{Desires of agents in Example \ref{firstex}}
\end{table}
In this example the trust function $\tau$ returns 1 for each pair and $h_C=1$.
\begin{figure}[H]\label{firstdiagram}
  \begin{center}
    \includegraphics[width=0.8\textwidth]{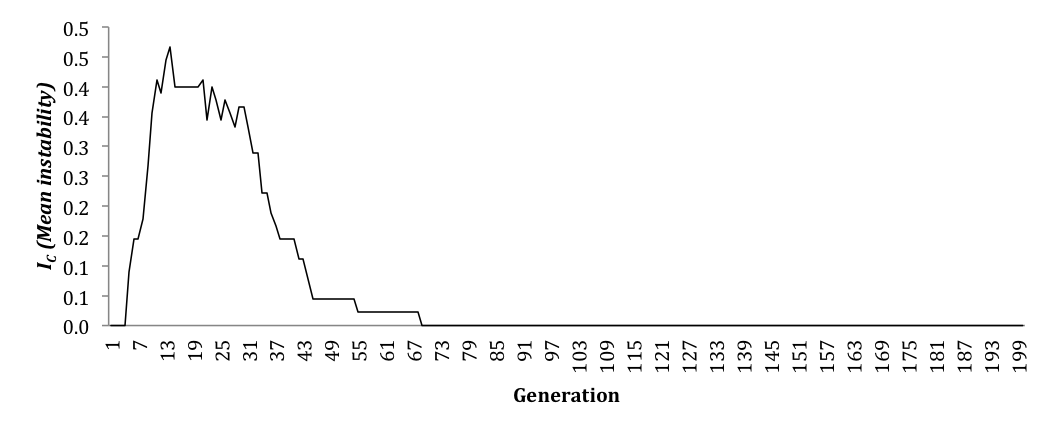}
    \caption{Mean social stability diagram in Example \ref{firstex}.}
\end{center}
\end{figure}
\end{example}
This simulation is designed to investigate the model, with multiple but completely similar agents, which all trust each other. Obviously, the homogeneity value for this society is $1$. Considering Figure \ref{firstdiagram}, the network reached to a stable mode after a number of generations and the spread rumor converged.

\begin{example}\label{secondex}
The same agents in the previous example and the same $Ob$ set, but the trust function $\tau$ returns $0.5$ for each pair of agents. In this example we have $h_C=1$.
\begin{figure}[H]\label{seconddiagram}
  \begin{center}
    \includegraphics[scale = 0.65]{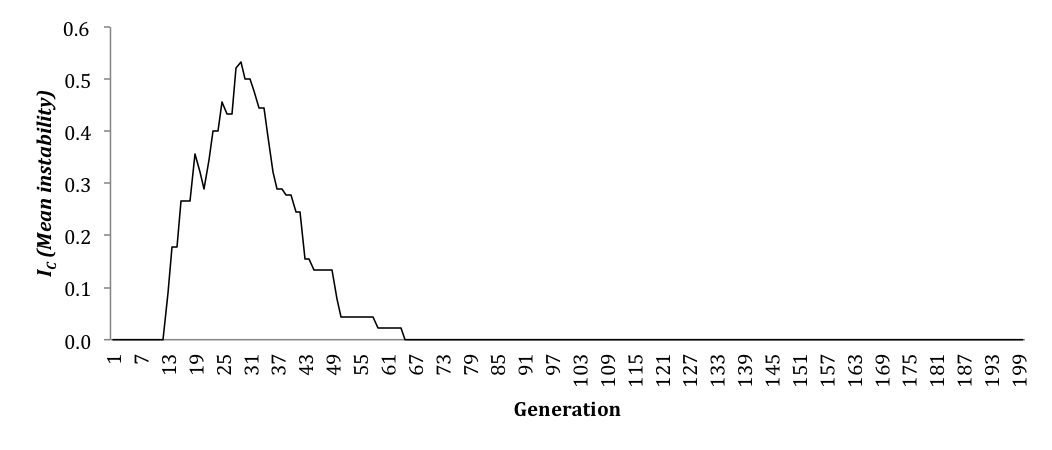}
    \end{center}
  \caption{Mean social stability diagram in Example \ref{secondex}. }
\end{figure}
\end{example}
The only different attribute of this simulation is the trust function. But homogeneity value is not changed apparently. Considering Figure \ref{seconddiagram}, like the previous simulation, the network reached to a stable mode after a number of generations.

\begin{example}\label{thirdex}
There are $2$ agents with $\upsilon_i=0.5$,$i=1,2$ and $Ob=\{ag_1\}$. Considering desires and trust values are shown in the Table $3$ and Table $4$, respectively.
\begin{table}[H]\label{secondtable}
\centering
\begin{tabular}{c |rrrrrrrr}
Sets of desire &\multicolumn{7}{l}{ Included propositions}\\\hline
$\Gamma_1^+$& 1&2&3&10&11&13&17\\
$\Gamma_1^-$&4&6&8&14&20&21&23\\\hline
$\Gamma_2^+$&1&3&7&9&11&13&17\\
$\Gamma_2^-$&4&5&8&14&16&20&22&23\\
\end{tabular}
\caption{Desires of agents in Example \ref{thirdex}}
\end{table}

\begin{table}[H]\label{thirdtable}
\centering
\begin{tabular}{l|c|r}
Agent &1&2\\\hline
1& 1&0.6\\\hline
2& 0.6&1\\
\end{tabular}
\caption{Trust value of connections in Example \ref{thirdex}}
\end{table}
Also in this example we have $h_C=1$.

\begin{figure}[H]\label{thirddiagram}
  \begin{center}
    \includegraphics[scale = 0.65]{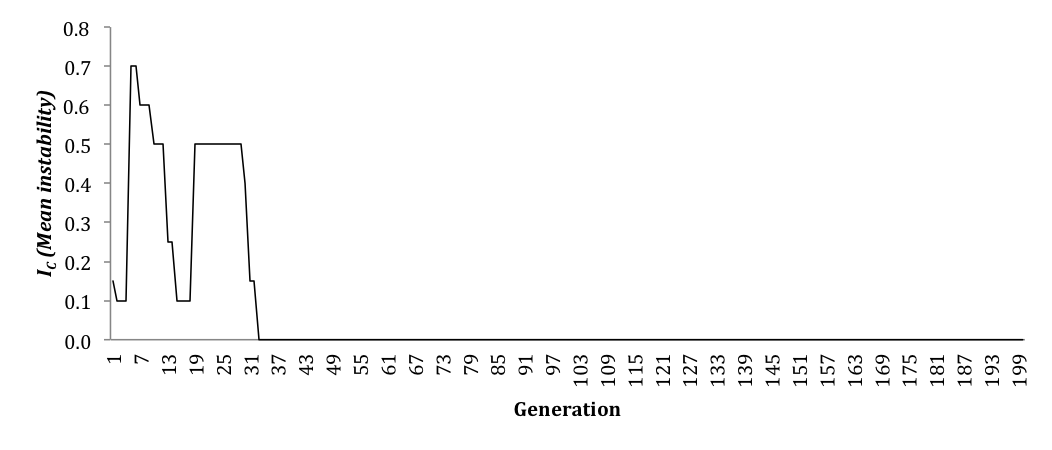}
    \end{center}
  \caption{Mean social stability diagram in Example \ref{thirdex}.}
\end{figure}
\end{example}

\begin{example}\label{fourthex}
There are $2$ agents with $\upsilon_i=0.5,i=1,2$ and $Ob=\{ag_1\}$. Considering desires and trust values are shown in the Table $5$ and Table $6$, respectively.
\begin{table}[H]
\centering
\begin{tabular}{c |cccccccc}
Sets of desire &\multicolumn{7}{l}{ Included propositions}\\\hline
$\Gamma_1^+$&2&5&10&11&13&17&18\\
$\Gamma_1^-$&4&6&8&19&20&23\\\hline
$\Gamma_2^+$&1&3&7&9&12&13&15\\
$\Gamma_2^-$&5&14&16&18&21&22\\
\end{tabular}
\label{fourthtable}
\caption{Desires of agents in Example \ref{fourthex}}
\end{table}

\begin{table}[H]
\centering
\begin{tabular}{c|c|c}
Agent &1&2\\\hline
1& 1&1\\\hline
2& 1&1\\
\end{tabular}
\label{fifthtable}
\caption{Trust value of connections in Example \ref{fourthex}}
\end{table}

In this example we have $h_C=0.3734$.
\begin{figure}[H]\label{fourthdiagram}
  \begin{center}
    \includegraphics[scale = 0.65]{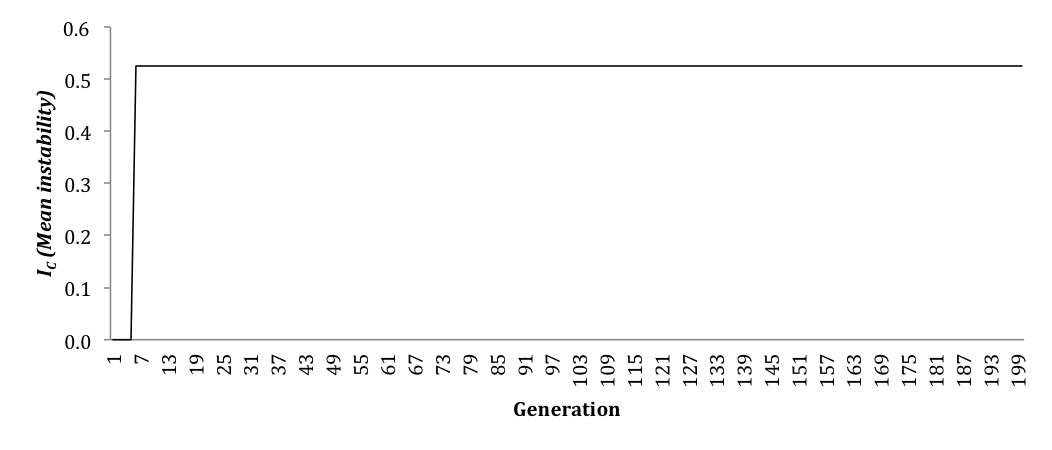}
    \end{center}
  \caption{Mean social stability diagram in Example \ref{fourthex}. }
\end{figure}
\end{example}
Example \ref{thirdex} and \ref{fourthex} are designed to investigate the model supposing $2$ different agents, where in the Example \ref{thirdex}, these $2$ agents make homogeneous network, whereas in the example \ref{fourthex}, agents are inconsistent with each other in their desires, therefore we can make result according to the obtained diagrams (Figure \ref{thirddiagram} and Figure \ref{fourthdiagram}) that the homogeneity value effects the mean stability of the network as the outcome of simulation.

\begin{example}\label{fifthex}
There are $9$ agents with common veracity value ($\upsilon_i=0.5,i=1,2,... 9$) and $Ob=\{ag_9\}$. Considering desires and trust values are shown in the Table $7$ and Table $8$, respectively.

\begin{table}[H]\label{sixthtable}
\centering
\begin{tabular}{c |cccccccc}
Sets of desire &\multicolumn{7}{l}{ Included propositions}\\\hline
$\Gamma_1^+$&1&2&3&9&11&13&17\\
$\Gamma_1^-$&6&14&16&21&23\\\hline
$\Gamma_2^+$&4&12&22\\
$\Gamma_2^-$&5&6&14&21\\\hline
$\Gamma_3^+$&1&3&4&9&13&17\\
$\Gamma_3^-$&6&10&14&18&21\\\hline
$\Gamma_4^+$& 1&2&3&4&9&12&13\\
$\Gamma_4^-$&5&6&14&16&18&23\\\hline
$\Gamma_5^+$&1&2&3&4&12&13&17\\
$\Gamma_5^-$&5&6&14&18&23\\\hline
$\Gamma_6^+$&4&9&12&13&17&22\\
$\Gamma_6^-$&6&14&16&19&21\\\hline
$\Gamma_7^+$&2&4&13&22\\
$\Gamma_7^-$&5&6&16\\\hline
$\Gamma_8^+$&3&12&13&22\\
$\Gamma_8^-$&10&18&23\\\hline
$\Gamma_9^+$&1&2&4&9&12&22\\
$\Gamma_9^-$&5&6&14&19&21
\end{tabular}
\caption{Desires of agents in Example \ref{fifthex}}
\end{table}

\begin{table}[H]\label{seventhtable}
\centering
\begin{tabular}{c|c|c|c|c|c|c|c|c|c}
Agent &1&2&3&4&5&6&7&8&9\\\hline
1&1&0.3&0.3&0.37&0.3&0.32&0.5&0.58&0.33\\\hline
2&0.3&1&0.3&0.35&0.36&0.4&0.52&0.79&0.32\\\hline
3&0.3&0.3&1&0.38&0.36&0.31&0.36&0.58&0.34\\\hline
4&0.37&0.35&0.38&1&0.37&0.32&0.39&0.63&0.34\\\hline
5&0.3&0.36&0.36&0.37&1&0.36&0.32&0.53&0.35\\\hline
6&0.32&0.4&0.31&0.32&0.36&1&0.31&0.39&0.3\\\hline
7&0.5&0.52&0.36&0.39&0.32&0.31&1&0.41&0.33\\\hline
8&0.58&0.79&0.58&0.63&0.53&0.39&0.41&1&0.57\\\hline
9&0.33&0.32&0.34&0.34&0.35&0.3&0.33&0.57&1\\
\end{tabular}
\caption{Trust value of connections in Example \ref{fifthex}}
\end{table}

\begin{figure}[H]
  \begin{center}
    \includegraphics[scale = 0.65]{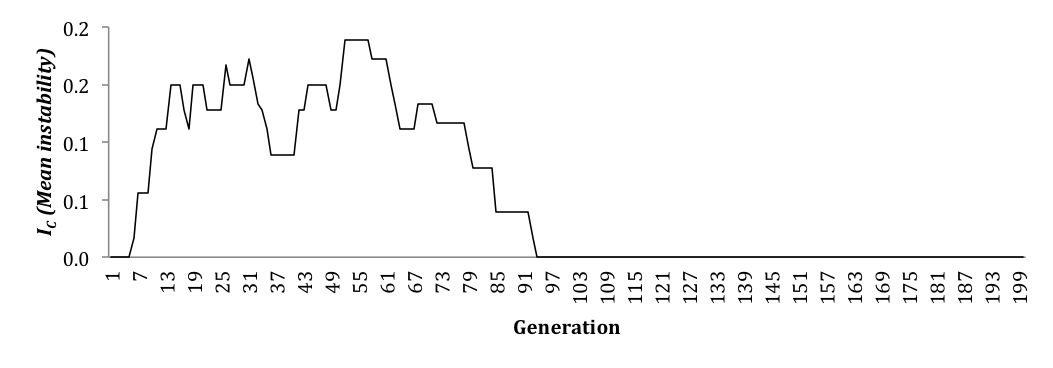}
    \end{center}
  \caption{Mean social stability diagram in Example \ref{fifthex}. The network reached to a stable mode and the spread rumor is converged.}
\end{figure}

\end{example}

\begin{example}\label{sixthex}
There are $9$ agents with common veracity value ($\upsilon_i=0.5,i=1,2,... 9$) and $Ob=\{ag_9\}$. Considering desires are shown in the Table 9 where the trust values are the same as previous example.

\begin{table}[H]\label{eightthtable}
\centering
\begin{tabular}{c |ccccccccc}
Sets of desire &\multicolumn{9}{l}{ Included propositions}\\\hline
$\Gamma_1^+$&1&2&3&7&9&11&13&17\\
$\Gamma_1^-$&4&6&8&14&16&20&21&23\\\hline
$\Gamma_2^+$&4&12&18&22\\
$\Gamma_2^-$&5&6&7&14&20&21\\\hline
$\Gamma_3^+$&1&3&4&7&9&13&17&20&23\\
$\Gamma_3^-$&6&8&10&12&14&18&21\\\hline
$\Gamma_4^+$& 1&2&3&4&7&9&12&13&19\\
$\Gamma_4^-$&5&6&14&18&23\\\hline
$\Gamma_5^+$&1&2&3&4&7&9&12&13&19\\
$\Gamma_5^-$&5&6&14&18&23\\\hline
$\Gamma_6^+$&4&9&12&13&17&22\\
$\Gamma_6^-$&6&11&14&16&19&21\\\hline
$\Gamma_7^+$&2&4&13&19&21&22\\
$\Gamma_7^-$&5&6&11&16&18\\\hline
$\Gamma_8^+$&3&6&12&13&22\\
$\Gamma_8^-$&7&10&18&23\\\hline
$\Gamma_9^+$&1&2&4&9&12&18&22\\
$\Gamma_9^-$&5&6&8&11&14&19&21
\end{tabular}
\caption{Desires of agents in Example \ref{sixthex}}
\end{table}
In this example we have $h_C=1.4\times10^{-7}$.
\begin{figure}[H]
  \begin{center}
    \includegraphics[scale = 0.65]{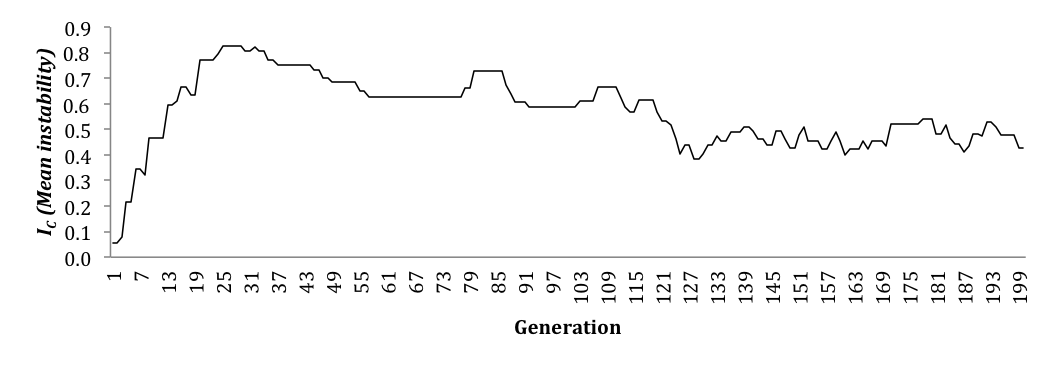}
    \end{center}
  \caption{Mean social stability diagram in Example \ref{sixthex}.}
\end{figure}

\end{example}
The obtained diagrams of simulations in Example \ref{fifthex} and Example \ref{sixthex}, emphasizes on the result, which is already gained from the comparison of Example 3 and Example 4, however there are more agents and connections.

\begin{example}\label{seventhex}
There are $9$ agents with common veracity value ($\upsilon_i=0.5,i=1,2,... 9$) and $Ob=\{ag_9\}$. Considering desires are shown in the Table \ref{ninthtable} where the trust values are the same as previous example.

\begin{table}[H]\label{ninthtable}
\centering
\begin{tabular}{c |cccccccc}
Sets of desire &\multicolumn{8}{c}{ Included propositions}\\\hline
$\Gamma_1^+$&1&2&3&7&9&11&13&17\\
$\Gamma_1^-$&6&8&14&16&20&21&23\\\hline
$\Gamma_2^+$&4&12&22\\
$\Gamma_2^-$&5&6&14&21\\\hline
$\Gamma_3^+$&1&3&4&9&13&17\\
$\Gamma_3^-$&6&8&10&14&18&21\\\hline
$\Gamma_4^+$& 1&2&3&4&9&12&13\\
$\Gamma_4^-$&5&6&14&15&18&23\\\hline
$\Gamma_5^+$&1&2&3&4&12&13&17\\
$\Gamma_5^-$&5&6&7&8&14&18&23\\\hline
$\Gamma_6^+$&4&9&12&13&17&22\\
$\Gamma_6^-$&6&14&16&19&21\\\hline
$\Gamma_7^+$&2&4&13&22\\
$\Gamma_7^-$&5&6&16\\\hline
$\Gamma_8^+$&3&12&13&22\\
$\Gamma_8^-$&10&18&23\\\hline
$\Gamma_9^+$&1&2&4&9&12&18&22\\
$\Gamma_9^-$&5&6&14&19&21
\end{tabular}
\caption{Desires of agents in Example \ref{seventhex}}
\end{table}
In this example we have $h_C=1$.
\begin{figure}[H]
  \begin{center}
    \includegraphics[scale = 0.65]{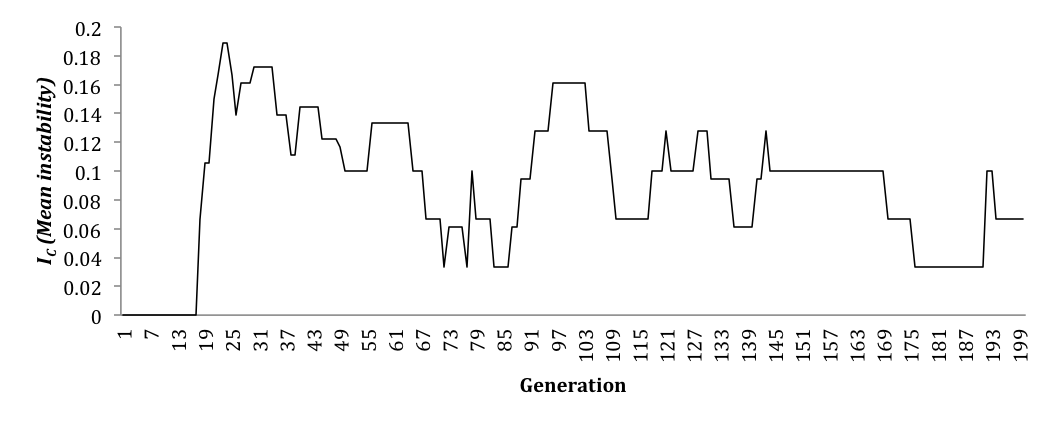}
    \end{center}
  \caption{Mean social stability diagram in Example \ref{seventhex}.}
\end{figure}
\end{example}

The attributes of the simulation in Example \ref{seventhex} is almost similar to Example \ref{sixthex}, but absence of some propositions in desires of a large part of agents, causes appearing unexpected outcome from the simulation process that says the homogeneity is necessary but insufficient condition for stability of colony while spreading a rumor.
\begin{lem}\label{firstlem}
The identical distance between a pair of agents is greater than zero if and only if they conflict with each other:
\begin{equation}
d(ag_i,ag_j)>0 \Leftrightarrow \exists p_k: (p_k\in\Gamma_i^+\cap\Gamma_j^-  \vee  p_k\in\Gamma_i^-\cap\Gamma_j^+)
\end{equation}
\end{lem}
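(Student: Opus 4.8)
The plan is to reduce the equivalence to a termwise case analysis on the single quantity $|M_{i,k}-M_{j,k}|$, using the description of the membership matrix from Section \ref{agentmodel}: every entry $M_{a,k}$ equals $\pi(ag_a,p_k)\in\{-1,0,1\}$, with $M_{a,k}=1$ exactly when $p_k\in\Gamma_a^+$, $M_{a,k}=-1$ exactly when $p_k\in\Gamma_a^-$, and $M_{a,k}=0$ otherwise. Since $M_{i,k},M_{j,k}\in\{-1,0,1\}$, we have $|M_{i,k}-M_{j,k}|\in\{0,1,2\}$, so $\lfloor |M_{i,k}-M_{j,k}|/2\rfloor\in\{0,1\}$, and moreover $\lfloor |M_{i,k}-M_{j,k}|/2\rfloor=1$ if and only if $|M_{i,k}-M_{j,k}|=2$. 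Two numbers in $\{-1,0,1\}$ differ by $2$ only when one is $1$ and the other is $-1$; unwinding the meaning of $M$, this says $(p_k\in\Gamma_i^+\cap\Gamma_j^-)\vee(p_k\in\Gamma_i^-\cap\Gamma_j^+)$, which is precisely the conflict condition of Definition \ref{conflictdef}. Hence, for each $k$, the coefficient $\lfloor |M_{i,k}-M_{j,k}|/2\rfloor$ is exactly the indicator of ``$ag_i$ and $ag_j$ conflict at $p_k$.''

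Next I would treat the radicand of Definition \ref{identicaldistancedef}. It is a sum of terms $\lfloor |M_{i,k}-M_{j,k}|/2\rfloor\,\delta(p_k)^2$, each non-negative; therefore $d(ag_i,ag_j)$, being the square root of this sum, is strictly positive if and only if the radicand is strictly positive, if and only if at least one summand is strictly positive. By the previous paragraph a summand is strictly positive exactly when there is a conflict at the corresponding $p_k$ and $\delta(p_k)^2>0$. The forward implication $d(ag_i,ag_j)>0\Rightarrow\exists p_k$ (conflict at $p_k$) then follows at once, with no extra hypothesis. For the converse, if $ag_i$ and $ag_j$ conflict at some $p_k$ the associated coefficient equals $1$, so the radicand is at least $\delta(p_k)^2$; invoking that priorities are strictly positive (the priority function takes values in $(0,1]$ in every example of this paper) gives radicand $\ge\delta(p_k)^2>0$, hence $d(ag_i,ag_j)>0$.

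There is no deep obstacle; the argument is pure bookkeeping. The only delicate point — and the single place an assumption is used — is the $\Leftarrow$ direction, which would fail if a conflicting proposition were allowed to carry priority $0$. I would therefore make explicit the (harmless) convention that $\delta(p_k)>0$ for all $p_k\in P$, noting that the $\Rightarrow$ direction holds unconditionally, and then conclude by combining the two observations above.
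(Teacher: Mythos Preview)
Your argument is correct and follows essentially the same termwise case analysis as the paper's own proof. You are in fact slightly more careful: the paper passes directly from $d(ag_i,ag_j)>0$ to the existence of $p_k$ with $\lfloor |M_{i,k}-M_{j,k}|/2\rfloor>0$, implicitly assuming (as you make explicit) that $\delta(p_k)>0$ for the relevant proposition.
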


\begin{proof}{
By \ref{identicaldistancedef}:\\
$d(ag_i,ag_j)>0$
$\Leftrightarrow \exists p_k$ : $\lfloor \frac{|M_{i,k}-M_{j,k}|}{2}\rfloor >0$\\
$\Leftrightarrow \exists p_k$ : $(M_{i,k} = 1,M_{j,k}=-1)$ or $(M_{i,k} = -1,M_{j,k}=1)$\\
$\Leftrightarrow \exists p_k$ : $p_k\in\Gamma_i^+\cap\Gamma_j^-  \vee  p_k\in\Gamma_i^-\cap\Gamma_j^+$\\
$\therefore d(ag_i,ag_j)>0 \Leftrightarrow \exists p_k$ : $p_k\in\Gamma_i^+\cap\Gamma_j^-  \vee  p_k\in\Gamma_i^-\cap\Gamma_j^+$.}
\end{proof}
We want to investigate a situation, which two agents $ag_i, ag_j$ conflict in their desires and $ag_i$ acts as sender of a version of a rumor which fit completely to his desires and $ag_j$ plays the role of rumor receiver:   
\begin{lem}\label{secondlem}
If the number of unaccepted propositions for $ag_j$ in generation f (that $ag_j$ spread its final promoted version) was zero ($|{\Delta_j^*}^{f}|=0$) and there exist $ag_i$ which there is conflict between $ag_i$ and $ag_j$ on the value of $p_k$ (see definition 4) and $\tau(ag_i,ag_j )>0$, then if it comes to turn of $ag_i$ in generation $t$, $t>f$ to spread $\gamma_j^t$; the probability of change in the value of $p_k$ in the merged version of $ag_i$ in generation t is greater than zero:
$$(\exists ag_i: p_k\in\Gamma_i^+\cap\Gamma_j^-  \vee  p_k\in\Gamma_i^-\cap\Gamma_j^+ , \tau(ag_i,ag_j )>0)\Rightarrow (\mathbb{P}\{ {b_k^{o_i}}^t+{b_k^{*_i}}^{t-1}=1\}>0)$$
\end{lem}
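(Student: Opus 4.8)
The plan is to argue inside the probability space generated by the random choice, in each generation, of the agent that takes a turn, together with the coin flips of the roulette-wheel selection (\ref{rouletteeq}) and of the veracity test inside the MUTATE procedure (Algorithm \ref{mutate}); since a single realisation of positive probability on which the stated event occurs already yields $\mathbb{P}\{\cdot\}>0$, it suffices to exhibit one such realisation. By symmetry I would handle the case $p_k\in\Gamma_i^+\cap\Gamma_j^-$, the case $p_k\in\Gamma_i^-\cap\Gamma_j^+$ being obtained by exchanging the bit values $0$ and $1$ throughout.

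The first step pins down the $k$-th bit of the version $ag_j$ keeps promoting. From $|{\Delta_j^*}^{f}|=0$ and Definition \ref{acceptedprs}, every proposition in $ag_j$'s desire is accepted in $\gamma_j^{f}$, so $p_k\in\Gamma_j^-$ forces the $k$-th bit of $\gamma_j^{f}$ to equal $0$. Having no unaccepted proposition, $ag_j$ builds an empty rejected set in MUTATE, hence never alters a bit and merely re-promotes $\gamma_j^{f}$ whenever it is chosen again; thus for every $t>f$ the version $\gamma_j^{t}$ that can reach $ag_i$ still has $k$-th bit $0$, while the value $ag_i$ accepts for $p_k$ is $1$.

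The second step routes $\gamma_j^{t}$ into $Box_i$ and makes its value win the weighted majority at coordinate $k$. Since $G$ is complete and each agent is chosen with probability $1/|Ag|$, the following history has positive probability: first $ag_i$ takes a turn at some generation $t'$ with $f<t'<t$ (or, when $t=f+1$, a turn preceding generation $f$) on which its promoted version acquires $k$-th bit equal to $1$ --- the value it accepts --- either because the merge already gives that value there, or because $p_k$ is then rejected and $ag_i$'s MUTATE selects it (with positive probability, since $\delta(p_k)>0$, by (\ref{rouletteeq})) and flips it (with probability $1-\upsilon_i>0$); then $ag_j$ is chosen at each of the generations $t'+1,\dots,t-1$, so by the HEAR procedure (Algorithm \ref{hear}) and $\tau(ag_i,ag_j)>0$ the string $\gamma_j^{t}$, whose $k$-th bit is $0$, is inserted into $Box_i$ with positive trust weight while no new string with $k$-th bit $1$ reaches $ag_i$; finally $ag_i$ is chosen at generation $t$. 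Provided the total trust weight in $Box_i$ backing value $0$ at coordinate $k$ (from $\gamma_j^{t}$, and from any further agents already carrying that value) is at least the weight backing $1$ there, the weighted average of (\ref{majorityeq}) at coordinate $k$ does not reach $1/2$, whence ${b_k^{o_i}}^{t}=0$; since ${b_k^{*_i}}^{t-1}=1$ along this history, the event ${b_k^{o_i}}^{t}+{b_k^{*_i}}^{t-1}=1$ holds, and as the history has positive probability the conclusion follows.

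The step I expect to be the real obstacle is making $ag_j$'s value actually win that weighted majority. After $ag_i$'s last turn its re-promoted version sits in $Box_i$ with self-trust $\tau(ag_i,ag_i)=1$ by (\ref{mutualtrustass}), so a lone copy of $\gamma_j^{t}$, whose weight $\tau(ag_i,ag_j)$ may be arbitrarily small, need not by itself drag the average of (\ref{majorityeq}) below $1/2$ at coordinate $k$. The realisation therefore has to be designed so that enough trust weight backing $ag_j$'s value accumulates in $Box_i$ --- for example by first letting several other agents be contaminated by $\gamma_j^{f}$ and then relay distinct versions carrying that value to $ag_i$, or, in the degenerate two-agent case, by comparing directly the stored weight of $ag_i$'s own re-promoted version with $\tau(ag_i,ag_j)$. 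Everything else --- identifying the $k$-th bit of $\gamma_j^{f}$ through Definition \ref{acceptedprs} and delivering it via HEAR --- is routine.
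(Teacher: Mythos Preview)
Your first step --- deducing from $|{\Delta_j^*}^{f}|=0$ and Definition~\ref{acceptedprs} that the $k$-th bit of $\gamma_j^{f}$ is forced to the value $ag_i$ rejects --- is exactly what the paper does, and is the entire substantive content of the paper's argument. After establishing that $p_k\in\Gamma_j^+$ forces ${b_k^{*_j}}^{f}=1$ (respectively $p_k\in\Gamma_j^-$ forces ${b_k^{*_j}}^{f}=0$), the paper simply writes that ``by Eq.~\ref{majorityeq} for calculating ${b_k^{o_i}}^{t}$, the probability of changing the value of $p_k$ in $r^{o_i}$ in generation $t$ is greater than zero'' and stops. There is no construction of a realisation, no discussion of the contents of $Box_i$, and no treatment of the weighted majority; the presence of $\gamma_j^{f}$ in $Box_i$ with positive trust is taken, without further argument, to make the flip possible.

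Your second and third paragraphs therefore go well beyond the paper, and the obstacle you isolate is not an artefact of your approach but a genuine gap the paper leaves open. In the two-agent colony you mention, after $ag_i$'s previous turn the RUN procedure (Algorithm~\ref{run}) leaves $Box_i$ containing only $ag_i$'s own re-promoted version with self-trust $1$ by~(\ref{mutualtrustass}); a subsequent HEAR from $ag_j$ adds one copy of $\gamma_j$ with weight $\tau(ag_i,ag_j)\le 1$. If ${b_k^{*_i}}^{t-1}=1$ and $\gamma_j$ carries $0$, the average in~(\ref{majorityeq}) is $1/(1+\tau)\ge 1/2$, hence ${b_k^{o_i}}^{t}=1$; if ${b_k^{*_i}}^{t-1}=0$, both stored versions carry $0$ and again ${b_k^{o_i}}^{t}=0$. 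Either way ${b_k^{o_i}}^{t}={b_k^{*_i}}^{t-1}$ deterministically, so the event in the lemma has probability zero in that configuration. Your instinct that additional agents relaying $\gamma_j$'s value are needed to tip the majority is correct, but the lemma as stated carries no hypothesis guaranteeing such agents exist, and the paper's one-line appeal to~(\ref{majorityeq}) does not supply the missing weight. In short: your first step reproduces the paper's proof in full; the difficulty you flag is real and is simply not addressed there.
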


\begin{proof}
{Assuming $|{\Delta_j^*}^f|=0$ ($f<t$ and ${r^{*_j}}^f$ is stored in $Box_i$), then we consider one of the following conditions for $p_k$:
 \begin{enumerate}
 \item if $p_k\in \Gamma^+_j$ then ${b_k^{*_j}}^f = 1$(i.e. the value of $p_k$ in the generated rumor of $ag_j$ in generation $f$ is true).
 \item if $p_k\in \Gamma^-_j$ then ${b_k^{*_j}}^f = 0$(i.e. the value of $p_k$ in the generated rumor of $ag_j$ in generation $f$ is false).
 \end{enumerate}
 Therefore; if $p_k\in\Gamma_i^+\cap\Gamma_j^-  \vee  p_k\in\Gamma_i^-\cap\Gamma_j^+$, by Eq.\ref{majorityeq} for calculating ${b_k^{o_i}}^t$, the probability of changing the value of $p_k$ in $r^{o_i}$ in generation $t$ is greater than zero. 

 }
\end{proof}

\begin{prethm}\label{firsttheorem}
If spreading a rumor in colony $C$ leads to reach a stable state after a number of generations (i.e. for some $m>0$, for all generations $t >m$, $I_C (t)=0$), then the colony $C$ is homogenous (i.e. $h_C=1$). That is :
$$(\exists m\in N,\forall t>m: I_C (t)=0)\Rightarrow h_C=1$$
\end{prethm}

\begin{proof}
{Let $C$ be a colony that is reached to a stable state after $m$ generations by  Eq.\ref{instabilityeq} we have :$$\forall t>m: I_C (t)=0 $$
$$\Rightarrow \forall t>m:\frac{1}{|C|} \sum_{ag_i\in C}\Omega_i^t  = 0 $$
\\ note that by Eq.\ref{deltaeq}, we have $\Omega_i^t \geq 0; hence $:
$$\forall t>m, \forall ag_i \in C, \Omega_i^t = 0$$
$$\Rightarrow \forall t>m, \forall ag_i \in C:  |{\Delta_i^*}^t| (1-\upsilon_i) = 0$$
We want to prove that if heterogeneity between $ag_i$ and any neighbor agent becomes greater than zero then the probability that agent $ag_i$ becomes instable in generation $t>m$ is greater than zero, and then it can not remain in the stable state. So, we consider one of the following cases for each $ag_i$: \\
{\bf{case 1:}} $ (1-\upsilon_i) = 0$. Therefore; using Eq.\ref{hceq}, for each $ag_j \in C$, we have: $H_{i,j}=0$.\\ 
{\bf{case 2:}} $\forall t>m:  |{\Delta_i^*}^t| = 0$. Therefore; after generation $m$, the probability that any change occurs in the size of ${\Delta_i^*}^t$ (set of unaccpted propositions for $ag_i$ in generation t) is zero. So the probability of change in the value of any proposition $p_k$ in $\gamma_i^t$ while going from generation $t$ to $t+1$ is zero.
$$\forall p_k: \mathbb{P}\{ {b_k^{*_i}}^{t+1}+ {b_k^{*_i}}^t = 1 \} = 0.$$
\\Assuming that heterogeneity between $ag_i$ and any neighbor agent becomes greater than zero, if there exists $ag_j$, $H_{i,j} > 0$, then:
$$ d(\vec{ag_i}  ,\vec{ag_j} ) \tau(ag_i,ag_j )(1-\upsilon_i )>0 .$$
\\ we obtain :   $\exists ag_j:  d(\vec{ag_i}$ , $\vec{ag_j} ) > 0$, $\tau(ag_i,ag_j )>0$
\\ which implies that there is conflict among desires of $ag_i,ag_j$. By Lemma \ref{firstlem} we have: 
\begin{equation}\label{proofeq}
\exists ag_j \exists p_k \in P : (p_k\in\Gamma_i^+\cap\Gamma_j^- \vee p_k\in\Gamma_i^-\cap\Gamma_j^+)
\end{equation}
We assume that it comes to the turn of agent $ag_i$ in generation $T>m+1$  to spread its generated rumor, as mentioned in  section $2.4$, change of last generated rumor is probable while merging and mutating procedure. 
$$\mathbb{P}\{ {b_k^{*_i}}^{t-1}+ {b_k^{*_i}}^t = 1 \} = \mathbb{P}\{ {b_k^{o_i}}^t+{b_k^{*_i}}^{t-1}=1\} \mathbb{P}\{ \beta({r^{o_i}}^t)=Yes\} + \mathbb{P}\{ mutation_{p_k}\}$$  
\\By knowing \ref{proofeq} and using Lemma \ref{secondlem}, we deduce that $\mathbb{P}\{ {b_k^{o_i}}^t+{b_k^{*_i}}^{t-1}=1\} >0$. Also we have:
$$\mathbb{P}\{ \beta({r^{o_i}}^t)=Yes\} = \mathbb{P}\{ \frac{|\Pi_i^t|}{|P|} > \Theta_i\}>0$$
and by Eq.\ref{rouletteeq}
$$\mathbb{P}\{ mutation_{p_k}\}=\frac{\delta(\tilde{p}_k)}{\sum_{\tilde{p}_j\in \Delta_i}\delta(\tilde{p}_j)}\times(1-\upsilon_i) > 0$$
then
$$\mathbb{P}\{ {b_k^{*_i}}^{t-1}+ {b_k^{*_i}}^t = 1 \} > 0.$$
and now we deduce that :
\begin{equation}
\exists ag_j \exists p_k \in P : (p_k\in\Gamma_i^+\cap\Gamma_j^- \vee p_k\in\Gamma_i^-\cap\Gamma_j^+) \Rightarrow \exists p_k :  \mathbb{P}\{ {b_k^{*_i}}^{t-1}+ {b_k^{*_i}}^t = 1 \} > 0
\end{equation}
contradiction. Therefore $H_{i,j}=0$. 
\\In both cases we obtained that $\forall ag_j, H_{i,j}=0$, then $h_C=\exp(-\sum_{ag_i\in C}\sum_{ag_j\in C}H_{i,j}) = 1$.
}
\end{proof}

\section{Related Work}\label{relatedworks}
In this section, we briefly present some of the proposed mathematical models for rumor and information propagation. Most of these models have traditionally used a rumor as epidemic approach that oversimplifies the phenomenon and demographic distribution of the people make role in rumor dissemination. 
\\\indent Kostka et al proposed a simple model in order to examine the diffusion of competing rumors (see \citet{kostka2008a}). They used concepts of game theory and location theory. They modeled the selection of rumors' starting nodes as a strategic game.
\\\indent Wang et al designed a trust model for rumor spreading (\citet{wang2009a}). They considered that, when people exchange information, the trust in the received information is related to the interpersonal closeness. They also used the Monte Carlo method to find the key source nodes in rumor spreading by comparing the total number of spread nodes and spreading time.
\\\indent One of the most challenging models, which are studied in our primary research, was the model proposed by Acemoglu et al (\citet{acemoglu2010a}). They provided a model to investigate the tension between information aggregation and spread of misinformation. In this model, individuals meet pairwise and exchange information by adopting the average of their pre-meeting beliefs and, under some assumptions; convergence of beliefs to a stochastic consensus is obtained. The main results of this paper quantify the extent of misinformation by providing bounds or exact results on the gap between the consensus value and the benchmark value.
\\\indent Nekovee et al (\citet{nekovee2007a}) introduced a general stochastic model for the spread of rumours, and derive mean-field equations that describe the dynamics of the model on complex social networks. They used analytical and numerical solutions of these equations to examine the threshold behavior and dynamics of the model on several models of such networks and showed that in both homogeneous networks and random graphs the model exhibits a critical threshold in the rumor-spreading rate below, which a rumor cannot propagate in the system. The obtained results show that scale-free social networks are prone to the spreading of rumours, just as they are to the spreading of infections. They are relevant to the spreading dynamics of chain emails, viral advertising and large-scale information dissemination algorithms on the Internet.
\\\indent Piqueira (\citet{piqueira2010a}) studied the equilibrium of rumor propagation. In the corresponding paper, in an analogy with the SIR (Susceptible-Infected-Removed) epidemiological model \citet{kermack1933a}, the ISS (Ignorant-Spreader-Stifler) rumor-spreading model is studied, and by using concepts from the Dynamical Systems Theory, stability of equilibrium points is established, according to propagation parameters and initial conditions.

\section{Concluding remarks and further works}\label{futurwork}
In this paper we studied rumor dissemination in societies that all agents can communicate to each other. In our model, rumor spreading does not effect the desires of agents (the set $\Gamma^+$ and $\Gamma^-$ remain unchanged through rumor spreading) and it converges, then the society needs to be homogenous. In other words, if a society is not homogenous, it is not possible that spread rumor converges. 
\\\indent We are going to improve our model for different goals in our further works. We may consider societies that the underlying graph is not a complete one. In this case, we need to cluster the society and find that in which subsection a rumor converges.
\\\indent We restricted ourselves to conjunctive literal forms (CLF) formulas to describe events and rumors. Conjunctive normal forms  (CNF) formulas are more expressive and as a further work, we may use CNF formulas to formally model events and rumor.


\newpage
\renewcommand{\refname}{References}
\bibliography{references} 

\begin{thebibliography}{30}
\expandafter\ifx\csname natexlab\endcsname\relax\def\natexlab#1{#1}\fi
\expandafter\ifx\csname url\endcsname\relax
  \def\url#1{\texttt{#1}}\fi
\expandafter\ifx\csname urlprefix\endcsname\relax\def\urlprefix{URL }\fi
\providecommand{\selectlanguage}[1]{\relax}

\bibitem[{Acemoglu \emph{et~al.}(2010)Acemoglu, Ozdaglar \&
  ParandeGheibi}]{acemoglu2010a}
\textsc{Acemoglu, A., Ozdaglar, A. \& ParandeGheibi, A.} (2010).
\newblock Spread of (mis)information in social networks.
\newblock \emph{Games and Economic Behavior} \textbf{70}, 194--227.

\bibitem[{Bonabeau(2002)}]{bonabeau2002}
\textsc{Bonabeau, E.} (2002).
\newblock Agent-based modeling: Methods and techniques for simulating human
  systems.
\newblock In: \emph{Proceeding of the National Academy of Science of the United
  State of America}, vol.~99.

\bibitem[{Borshchev \& Filippov(2004)}]{borshchev2004}
\textsc{Borshchev, A. \& Filippov, A.} (2004).
\newblock From system dynamics and discrete event to practical agent based
  modeling: Reasons, techniques, tools.
\newblock In: \emph{Proceedings of International Conference of the System
  Dynamics Society}.

\bibitem[{Buskens(2002)}]{buskens2002a}
\textsc{Buskens, V.} (2002).
\newblock \emph{Social networks and trust}, vol.~30.
\newblock Springer.

\bibitem[{Daley \& Gani(2000)}]{daley2000b}
\textsc{Daley, D. \& Gani, J.} (2000).
\newblock \emph{Epidemic Modeling}.
\newblock Cambridge University Press, 1st ed.

\bibitem[{Daley \& Kendal(1956)}]{daley1956a}
\textsc{Daley, D. \& Kendal, D.} (1956).
\newblock Stochastic rumors.
\newblock \emph{Journal of Applied Mathematics} \textbf{1}, 42--55.

\bibitem[{Eugster \emph{et~al.}(2004)Eugster, Guerraoui, Kermarrec \&
  Massoulie}]{eugster2004a}
\textsc{Eugster, P.~T., Guerraoui, R., Kermarrec, A. \& Massoulie, L.} (2004).
\newblock From epidemics to distributed computing.
\newblock \emph{IEEE Computer} \textbf{37}, 60--67.

\bibitem[{Fu \emph{et~al.}(2007)Fu, Liu \& Wang}]{fu2007a}
\textsc{Fu, F., Liu, L. \& Wang, L.} (2007).
\newblock Information propagation in hierarchical networks.
\newblock In: \emph{Proceedings of 46th IEEE Conference on Decision and
  Control}.

\bibitem[{Guo \emph{et~al.}(2011)Guo, Wang \& Leskovec}]{guo2011a}
\textsc{Guo, S., Wang, M. \& Leskovec, J.} (2011).
\newblock The role of social networks in online shopping: Information passing,
  price of trust, and consumer choice.
\newblock \emph{CoRR} \textbf{abs/1104.0942}.

\bibitem[{Huang(2000)}]{huang2000a}
\textsc{Huang, Z.} (2000).
\newblock Self-organized model for information spread in financial markets.
\newblock \emph{The European Physical Journal B} , 379--385.

\bibitem[{Kermack \& McKendrick(1933)}]{kermack1933a}
\textsc{Kermack, O. \& McKendrick, A.} (1933).
\newblock Contributions of mathematical theory to epidemics.
\newblock \emph{Royal Society Series A} \textbf{141}, 94--122.

\bibitem[{Kimmel(2004)}]{kimmel2004a}
\textsc{Kimmel, A.~J.} (2004).
\newblock \emph{Rumors and Rumor Control, A manager's guide to understanding
  and combatting rumors}.
\newblock Lawarance Erlbaum Assocites Publisher, 1st ed.

\bibitem[{Knapp(1944)}]{knap1944a}
\textsc{Knapp, R.} (1944).
\newblock A psychology of rumor.
\newblock \emph{The Public Opinion Quarterly} \textbf{8}(1), 22--37.

\bibitem[{Kostka \emph{et~al.}(2008)Kostka, Oswald \&
  Wattenhofer}]{kostka2008a}
\textsc{Kostka, J., Oswald, Y. \& Wattenhofer, R.} (2008).
\newblock Word of mouth: Rumor dissemination in social networks.
\newblock \emph{LNCS} \textbf{5058}, 185--196.

\bibitem[{Lefevre \& Picard(1944)}]{lefevre1944a}
\textsc{Lefevre, C. \& Picard, P.} (1944).
\newblock Distribution of the final extent of a rumour process.
\newblock \emph{Journal of Applied Probability} \textbf{31}(1), 244--249.

\bibitem[{Leskovec \emph{et~al.}(2010)Leskovec, Huttenlocher \&
  Kleinberg}]{leskovec2010a}
\textsc{Leskovec, J., Huttenlocher, D. \& Kleinberg, J.} (2010).
\newblock Predicting positive and negative links in online social networks.
\newblock In: \emph{Proceedings of the 19th international conference on World
  wide web}, WWW '10. New York, NY, USA: ACM.
\newblock \urlprefix\url{http://doi.acm.org/10.1145/1772690.1772756}.

\bibitem[{Mitchell(1998)}]{mitchell1998a}
\textsc{Mitchell, M.} (1998).
\newblock \emph{An introduction to genetic algorithms}.
\newblock MIT Press, 1st ed.

\bibitem[{Mollana(2008)}]{mollana2008a}
\textsc{Mollana, E.} (2008).
\newblock Computer simulation in social science.
\newblock \emph{Journal of management and governance} \textbf{12}(2), 205--211.

\bibitem[{Nekovee \emph{et~al.}(2007)}]{nekovee2007a}
\textsc{Nekovee, M.} \emph{et~al.} (2007).
\newblock Theory of rumour spreading in complex social networks.
\newblock \emph{Physica A} , 457--470.

\bibitem[{NSA(2013)}]{ciadoc}
\textsc{NSA} (2013).
\newblock Cia, propaganda commentary, "our national character".
\newblock \emph{National Security Archive Electronic Briefing Book No. 435, CIA
  Freedom of Information Act release} .

\bibitem[{Nye(2004)}]{nye2004b}
\textsc{Nye, J.} (2004).
\newblock \emph{Soft Power: The Means To Success In World Politics}.
\newblock Paperback.

\bibitem[{Olles(2006)}]{olles2006a}
\textsc{Olles, D.} (2006).
\newblock \emph{Rumor Propagation on Random and Small World Networks}.
\newblock Rochester Institute of Technology.

\bibitem[{Piqueira(2010)}]{piqueira2010a}
\textsc{Piqueira, J.} (2010).
\newblock Rumor propagation model: An equilibrium study.
\newblock \emph{Mathematical Problems in Engineering} , Article ID 631357, 7
  pages.

\bibitem[{Pittel(1990)}]{pittel1990a}
\textsc{Pittel, B.} (1990).
\newblock On a daley-kendall model of random rumours.
\newblock \emph{Journal of Applied Probability} \textbf{27}(1), 14--27.

\bibitem[{Stern(1902)}]{stern1902a}
\textsc{Stern, L.} (1902).
\newblock Zeitschrift fur die gesamte strafechtswissenschaft.
\newblock \emph{Zur Psychologie der Aussage. Experimentelle Untersuchungen uber
  Erinnerungstreue.} \textbf{XXII}.

\bibitem[{Sudbury(1985)}]{sudbury1985a}
\textsc{Sudbury, A.} (1985).
\newblock The proportion of the population never hearing a rumour.
\newblock \emph{Journal of Applied Probability} \textbf{22}(2), 443--446.

\bibitem[{Wang \emph{et~al.}(2009)Wang, Yue, Guo \& Zhang}]{wang2009a}
\textsc{Wang, L., Yue, Y., Guo, C. \& Zhang, X.} (2009).
\newblock Design of a trust model and finding key-nodes in rumor spreading
  based on monte-carlo method.
\newblock In: \emph{Mobile Adhoc and Sensor Systems, 2009. MASS '09. IEEE 6th
  International Conference on}.

\bibitem[{Zanette(2001)}]{zanette2001a}
\textsc{Zanette, H.} (2001).
\newblock Criticality of rumor propagation on small-world networks.
\newblock NASA Astrophysics Data System, eprint arXiv:condmat 0109049, NASA.

\bibitem[{Zhou \emph{et~al.}(2007)}]{zhou2007a}
\textsc{Zhou, J.} \emph{et~al.} (2007).
\newblock Influence of network structure on rumor propagation.
\newblock \emph{Physic Letters A} \textbf{368}(6), 458--463.

\bibitem[{Zhu \& Liao(2010)}]{zhu2010a}
\textsc{Zhu, Z. \& Liao, D.} (2010).
\newblock Research of rumors spreading based on transmission dynamics of
  complex network.
\newblock In: \emph{Proceedings of International Conference on Management and
  Service Science}.

\end{thebibliography}
\end{document}